\newtheorem{lem}{Lemma}
\newtheorem{thm}{Theorem}
\newcommand{\tanc}{\mathrm{tanc}}
\newcommand{\tanhc}{\mathrm{tanhc}}
\DeclareMathAlphabet{\bit}{OML}{cmm}{b}{it}
\def\<{\leqslant}           
\def\>{\geqslant}           
\def\d{\partial}
\def\Re{\mathrm{Re}}   
\def\Im{\mathrm{Im}}   
\def\mN{\mathbb{N}}    
\def\mR{{\mathbb R}}    
\def\mC{\mathbb{C}}    
\def\Tr{\mathrm{Tr}}       
\def\rT{{\rm T}}        
\def\rF{\mathrm{F}}        
\def\rHS{\mathrm{HS}}        
\def\bE{\mathbf{E}}    
\def\bM{\mathbf{M}}    
\def\bra{{\langle}}
\def\ket{{\rangle}}
\def\re{{\rm e}}        
\def\rd{{\rm d}}        
\def\cL{\mathcal{L}}
\def\bJ{\mathbf{J}}
\def\br{\mathbf{r}}
\def\x{\times}
\def\ox{\otimes}
\def\fF{\mathfrak{F}}
\def\fH{\mathfrak{H}}
\def\cK{\mathcal{K}}
\def\cI{\mathcal{I}}
\def\cP{\mathcal{P}}
\title{\LARGE \bf
A Girsanov Type Representation of Quadratic-Exponential Cost Functionals 
for Linear Quantum Stochastic  Systems$^*$}
\author{Igor G. Vladimirov$^\dagger$,\qquad
Ian R. Petersen$^\dagger$, \qquad
Matthew R. James$^\dagger$%
\thanks{$^*$This work is supported by the Air Force Office of Scientific Research (AFOSR) under agreement number FA2386-16-1-4065 and the Australian Research Council under grant DP180101805.}
\thanks{$^\dagger$Research School of Electrical, Energy and Materials Engineering, College of Engineering and Computer Science,
Australian National University, Canberra, Acton, ACT 2601,
Australia, {\tt\scriptsize
igor.g.vladimirov@gmail.com, i.r.petersen@gmail.com, matthew.james@anu.edu.au.
}
}
}
\begin{document}

\maketitle
\thispagestyle{empty}
\pagestyle{plain}

\begin{abstract}
This paper is concerned with multimode open quantum harmonic oscillators and quadratic-ex\-po\-nen\-ti\-al functionals (QEFs) as quantum risk-sensitive performance criteria. Such systems are described by linear quantum stochastic differential equations driven by multichannel bosonic fields.
We develop a finite-horizon expansion for the system variables using the eigenbasis of their two-point commutator kernel with 
noncommuting position-momentum pairs as coefficients.  This quantum Karhunen-Loeve expansion is 
used in order to obtain a Girsanov type representation for the quadratic-exponential functions of the system variables. This representation is valid regardless of a particular system-field state and employs the averaging  over an auxiliary  classical Gaussian random process whose covariance operator is defined in terms of the quantum commutator kernel. We use this representation in order to relate the QEF to the moment-generating functional of the system variables.  This result is also specified for the invariant multipoint Gaussian quantum state when the oscillator is driven by vacuum fields.
\end{abstract}



\section{\bf Introduction}

An important role in quantum control is played by performance criteria which address robustness with respect to unmodelled dynamics of quantum systems and their  interaction with uncertain environment (including external fields and other quantum or classical systems). In the case of open quantum harmonic oscillators (OQHOs), used in linear quantum systems theory \cite{NY_2017,P_2017}, the modelling errors may come from imprecise knowledge of the coefficients  of linear quantum stochastic differential equations (QSDEs)  which govern such systems in the framework of the Hudson-Parthasarathy calculus \cite{HP_1984,P_1992,P_2015}. Another source of uncertainties is the imprecisely known quantum state of the external bosonic fields which drive the system and affect its statistical properties such as mean square values and other moments of the system variables \cite{MJ_2012,NJP_2009}.

In the case of a quantum relative entropy \cite{OW_2010} description of the system-field state uncertainty \cite{YB_2009},  the worst-case values of quadratic costs admit upper bounds in terms of a quadratic-exponential functional (QEF) \cite{VPJ_2018b}   (see also \cite{B_1996}). The QEF is the exponential of an  integral-of-quadratic function of system variables, which is an alternative version   of the original quantum risk-sensitive cost \cite{J_2004,J_2005} based on time-ordered exponentials.  In addition to the robust performance bounds,  the QEF is also related to the large deviations of quantum trajectories \cite{VPJ_2018a}, so that its minimization (by varying the admissible controller parameters for a given quantum plant) can be used in order to secure more conservative closed-loop system dynamics. The QEF criteria and the original quantum risk-sensitive costs can share useful properties due to the Lie-algebraic links \cite{VPJ_2019a} between these two classes of cost functionals.

A closed-form representation for exponential moments of quadratic forms in a finite number of quantum variables with canonical commutation relations (CCRs) over Gaussian states \cite{KRP_2010},  obtained in \cite{VPJ_2018c}, employs symplectic geometric techniques and averaging over auxiliary classical random variables. Application of this approach to QEFs for linear quantum stochastic systems requires its extension to an infinite set of quantum variables associated with the Heisenberg evolution of the OQHO over a bounded time interval. Such an extension, proposed in \cite{VPJ_2019b}, uses  a finite-horizon quantum Karhunen-Loeve (QKL) expansion of the system variables over the eigenbasis of their two-point quantum covariance kernel, with the coefficients being
organised as conjugate pairs of noncommuting quantum mechanical positions and momenta \cite{S_1994}. Being associated with the covariance kernel, this QKL expansion is a quantum counterpart of its predecessor for classical random processes \cite{GS_2004} (see also \cite{IT_2010}).
 However, this result leads to a double series in the integral-of-quadratic representation due to the lack of orthogonality between the real and imaginary parts of the eigenfunctions. A single series representation for the QEF is achieved in \cite{VJP_2019} for a class of one-mode OQHOs with positive definite energy matrices by using a modified QKL expansion for the system variables over a different eigenbasis of the two-point CCR kernel with mutually orthogonal real and imaginary parts of the eigenfunctions.

The present paper frees the approach of \cite{VJP_2019} from dependence on the specific features of the one-mode case and extends it to multimode OQHOs. For such systems, we recast the eigenanalysis of the two-point commutator kernel as a boundary value problem (BVP) for a second-order ODE. 
The structure of the kernel leads to eigenfunctions with mutually orthogonal real and imaginary parts. The resulting QKL expansion of the system variables allows the QEF to be represented in terms of a single series over squared position-momentum pairs. We then apply the symplectic factorization and parameter randomization techniques of \cite{VPJ_2018c} in order to express the QEF, up to a correction factor, as an average value of the exponential of a bilinear function of the quantum system variables and an auxiliary  classical Gaussian random process. The structure of this representation resembles that of the Doleans-Dade exponential \cite{DD_1970} in the context of Girsanov's theorem  \cite{G_1960} on absolutely continuous change of measure for classical Ito processes. However, the correction factor originates from the Weyl CCRs \cite{F_1989} rather than the martingale property of the  Radon-Nikodym derivative process.

The Girsanov type representation, which involves only the commutation structure of the quantum dynamics regardless of the system-field state,  allows the QEF to be related to the moment-generating functional (MGF) of the system variables. This result is then specified for the invariant multipoint Gaussian quantum state \cite{VPJ_2018a} when the oscillator is driven by vacuum fields. In this case, the QEF is reduced to a quadratic-exponential moment for the auxiliary Gaussian random process.

In addition to their relevance to quantum risk-sensitive control, these results can contribute to the study of operator exponential structures which are being actively researched in mathematical physics and quantum probability (for example, in the context of operator algebras \cite{AB_2018}, moment-generating functions for quadratic Hamiltonians \cite{PS_2015} and the quantum L\'{e}vy area \cite{CH_2013,H_2018}).

The paper is organised as follows.
Section~\ref{sec:sys} specifies the class of linear quantum stochastic systems being considered.
Section~\ref{sec:comm} recasts the integral operator with the two-point CCR kernel of the system variables as a BVP. 
Section~\ref{sec:eig} describes the eigenvalues and eigenfunctions for this operator. 
Section~\ref{sec:QKL} employs this eigenbasis for a finite-horizon QKL expansion of the system variables.
Section~\ref{sec:quadro} develops a Girsanov type representation for QEFs and specifies it for the case of vacuum fields.
Section~\ref{sec:conc} provides concluding remarks.
Appendix establishes an auxiliary theorem on a randomised representation for quadratic-exponential functions of position-momentum pairs.

\section{\bf Open quantum harmonic oscillators}
\label{sec:sys}

For what follows, let $W:=(W_k)_{1\< k \< m}$ be a multichannel quantum Wiener process, organised as a column-vector  of an even number of time-varying self-adjoint operators $W_1(t), \ldots, W_m(t)$ on a symmetric Fock space $\fF$ \cite{P_1992}, which represent bosonic fields (the time argument $t$ will often be omitted for brevity). For any $t\>0$ and $k=1, \ldots, m$, the operator $W_k(t)$ acts on a subspace $\fF_t$ of $\fF$, with  the 
increasing family $(\fF_t)_{t\> 0}$  playing the role of a filtration for the Fock space $\fF$  in accordance with its continuous tensor-product structure  \cite{PS_1972}. The component quantum Wiener processes $W_1, \ldots, W_m$ satisfy the two-point CCRs 
$    [W(s), W(t)^\rT]
     := ([W_j(s), W_k(t)])_{1\< j,k\< m}
     =
    2i\min(s,t)J$ 
for all $s,t\>0$, where $(\cdot)^\rT$ is the transpose,    $[\alpha,\beta]:= \alpha\beta - \beta\alpha$ is the commutator of linear operators, and $i:= \sqrt{-1}$ is the imaginary unit. Here, 
    $J:=  \bJ \ox I_{m/2}$ 
is an orthogonal real  antisymmetric matrix (that is, $J^2 = -I_m$), where $\ox$ is the Kronecker product, $I_m$ is the identity matrix of order $m$, and
\begin{equation}
\label{bJ}
\bJ: = {\begin{bmatrix}
        0 & 1\\
        -1 & 0
    \end{bmatrix}}
\end{equation}
spans the subspace of antisymmetric matrices of order 2.  If $\xi$ and $\eta:= -i\d_\xi$ are the quantum mechanical position and momentum operators \cite{S_1994} (implemented on the Schwartz space \cite{V_2002}), then
$[\xi,\eta] = i$, and the vector
\begin{equation}
\label{zeta}
    \zeta:=
    {\begin{bmatrix}
        \xi\\
        \eta
    \end{bmatrix}}
\end{equation}
has the CCR matrix $\frac{1}{2}\bJ$ in the sense that $[\zeta, \zeta^\rT] = i\bJ$. Such conjugate position-momentum pairs provide building blocks for more complicated CCRs between several (or an infinite number of) quantum variables.

Now, consider a multimode open quantum harmonic oscillator (OQHO), which interacts with external bosonic fields and is endowed with an even number of time-varying  self-adjoint quantum variables $X_1(t), \ldots, X_n(t)$ acting on the subspace $\fH_t:= \fH_0 \ox \fF_t$ of the system-field tensor-product space $\fH:= \fH_0 \ox \fF$. Here, $\fH_0$ is a complex separable Hilbert space for the action of the initial system variables $X_1(0), \ldots, X_n(0)$. The vector     $X:=(X_k)_{1\< k\< n}$ of system variables satisfies the Weyl CCRs \cite{F_1989}
\begin{equation}
\label{Weyl}
  \re^{i(u+v)^\rT X(t)}
  =
  \re^{iu^\rT\Theta v}
    \re^{iu^\rT X(t)}
      \re^{iv^\rT X(t)},
      \qquad
      u,v\in \mR^n,
\end{equation}
for any $t\> 0$,
with their infinitesimal Heisenberg form given by
\begin{equation}
\label{XCCR}
    [X(t),X(t)^\rT] = 2i \Theta,
\end{equation}
where $\Theta$ is a constant real antisymmetric matrix of order $n$, which is assumed to be nonsingular.  The evolution of the OQHO is governed by a linear QSDE
\begin{equation}
\label{dX}
    \rd X = AX \rd t + B\rd W
\end{equation}
driven by the quantum Wiener process $W$ described above. Here, the matrices $A \in \mR^{n\x n}$, $B\in \mR^{n\x m}$ are parameterised as
\begin{equation}
\label{AB}
    A = 2\Theta (R + M^\rT JM),
     \qquad
     B = 2\Theta M^\rT
\end{equation}
by the energy and coupling matrices $R = R^\rT \in \mR^{n\x n}$, $M \in \mR^{m\x n}$ which specify the system Hamiltonian $\frac{1}{2} X^\rT R X$ and the vector $MX$ of $m$ system-field coupling operators. Due to their specific structure (\ref{AB}), the matrices $A$, $B$ satisfy the physical realizability (PR) condition \cite{JNP_2008}
\begin{equation}
\label{PR}
    A \Theta + \Theta A^\rT +  \mho = 0,
    \qquad
    \mho := BJB^\rT,
\end{equation}
which is an algebraic manifestation of the CCR preservation in (\ref{XCCR}). Since (\ref{PR}) is organised as an algebraic Lyapunov equation (ALE) with respect to $\Theta$, the CCR matrix can be recovered uniquely as $\Theta = \int_0^{+\infty} \re^{tA} \mho \re^{tA^\rT} \rd t$ in the case when $A$ is Hurwitz, which is assumed in what follows.   The OQHO retains its general structure under the linear transformations
\begin{equation}
\label{XSX}
    X \mapsto SX
\end{equation}
of the system variables with arbitrary nonsingular matrices $S \in \mR^{n\x n}$, except that its matrices in (\ref{XCCR})--(\ref{AB}) are modified as
\begin{equation}
\label{new}
  \Theta\mapsto S\Theta S^\rT,
  \qquad
  R\mapsto S^{-\rT}RS^{-1},
  \qquad
  M\mapsto M S^{-1}  ,
  \qquad
  A\mapsto SAS^{-1},
  \qquad
  B \mapsto SB,
\end{equation}
with the CCR matrix remaining nonsingular, and the dynamics matrix remaining Hurwitz.

\section{\bf Integral operator with the 
commutator kernel}
\label{sec:comm}

Due to linearity of the QSDE (\ref{dX}), the system variables satisfy
\begin{equation}
\label{Xsol}
    X(t)
    =
    \re^{(t-s)A} X(s)
    +
    \int_s^t
    \re^{(t-v)A} B \rd W(v),
    \qquad
    t\> s\> 0.
\end{equation}
In combination with the commutativity $[\rd W(v), X(s)^\rT] = 0$  between the forward increments of the quantum Wiener   process and the past system variables for all $v\> s\>0$, and the CCRs (\ref{XCCR}),  the relation (\ref{Xsol}) yields the two-point CCRs \cite{VPJ_2018a}:
\begin{equation}
\label{XXcomm}
    [X(s), X(t)^\rT]
    =
    2i\Lambda(s-t),
    \qquad
    s,t\>0,
\end{equation}
with
\begin{equation}
\label{Lambda}
    \Lambda(\tau)
     :=
    \left\{
    {\begin{matrix}
    \re^{\tau A}\Theta& {\rm if}\  \tau\> 0\\
    \Theta\re^{-\tau A^{\rT}} & {\rm if}\  \tau< 0\\
    \end{matrix}}
    \right.
    =
    -\Lambda(-\tau)^\rT.
\end{equation}
The one-point CCRs (\ref{XCCR}) are a particular case of (\ref{XXcomm}), (\ref{Lambda}) since $\Lambda(0) = \Theta$. For a fixed but otherwise arbitrary time horizon $T>0$,   the two-point CCR function $\Lambda$ gives rise to a skew self-adjoint linear operator $\cL$ (in view of the second equality in (\ref{Lambda})) which maps a function $f \in L^2([0,T],\mC^n)$ to another such function $g:= \cL(f)$ as
\begin{equation}
\label{fg}
    g(s)
    :=
    \int_0^T
    \Lambda(s-t) f(t)
    \rd t,
    \qquad
    0\< s \< T.
\end{equation}
Here, the Hilbert space $L^2([0,T],\mC^n)$ of square integrable $\mC^n$-valued functions on $[0,T]$ is endowed with the inner product
$
    \bra f,g\ket := \int_0^T f(t)^* g(t)\rd t
$ and the norm $\|f\|:= \sqrt{\bra f,f\ket} = \sqrt{\int_0^T |f(t)|^2\rd t}$, where $(\cdot)^*:= {{\overline{(\cdot)}}}^\rT$ is the complex conjugate transpose.
The following theorem represents the integral operator in the form of  a BVP.

\begin{thm}
\label{thm:BVP}
Suppose the matrix $\mho = -\mho^\rT  \in \mR^{n\x n}$ in (\ref{PR}) satisfies
\begin{equation}
\label{BJB}
  \det \mho \ne 0.
\end{equation}
Then the integral operator $\cL$ in (\ref{fg}) with the two-point CCR function (\ref{Lambda}) describes the solution of the second-order ODE
\begin{equation}
\label{ODE}
  g'' + (\mho A^\rT \mho^{-1}-A) g'- \mho A^\rT \mho^{-1} A g   = -\mho f
\end{equation}
over the time interval $[0,T]$ subject to the boundary conditions
\begin{equation}
\label{g+0_g-T}
g'(0)  = -\Theta A^\rT \Theta^{-1} g(0),
\qquad
g'(T)  = A g(T).
\end{equation}
In particular, this operator does not contain $0$ among its eigenvalues. \hfill$\square$
\end{thm}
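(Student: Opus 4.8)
The plan is to turn the integral operator (\ref{fg}) into a differential one by differentiating $g=\cL(f)$ twice in $s$, using that the kernel $\Lambda$ is continuous across the diagonal $s=t$ whereas its derivative jumps there by exactly $-\mho$. First I would split the integral at $s$ and write
\begin{equation*}
  g(s) = \underbrace{\int_0^s \re^{(s-t)A}\Theta f(t)\rd t}_{=:\,a(s)} + \Theta\underbrace{\int_s^T \re^{(t-s)A^\rT} f(t)\rd t}_{=:\,b(s)},
\end{equation*}
so that the Leibniz rule gives $a'=\Theta f + Aa$ and $b'=-f-A^\rT b$. Since $\Lambda(0^+)=\Lambda(0^-)=\Theta$, the two boundary contributions $\pm\Theta f(s)$ cancel in $g'=Aa-\Theta A^\rT b$. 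Differentiating once more, the boundary contributions no longer cancel but combine into $[\Lambda'(0^+)-\Lambda'(0^-)]f=(A\Theta+\Theta A^\rT)f=-\mho f$ by the realizability condition (\ref{PR}), leaving $g''=-\mho f + A^2 a + \Theta(A^\rT)^2 b$.

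It then remains to verify, by elementary matrix algebra, that the constant-coefficient combination on the left of (\ref{ODE}) annihilates the two integral remainders $a$ and $\Theta b$ and retains only $-\mho f$. Writing $N:=\mho A^\rT\mho^{-1}$ (well defined because $\det\mho\ne0$), the coefficient of $a$ in $g''+(N-A)g'-NAg$ collapses to $A^2+(N-A)A-NA=0$, while the coefficient of $\Theta b$ simplifies to $(A\Theta+\Theta A^\rT)A^\rT\Theta^{-1}-N(\Theta A^\rT\Theta^{-1}+A)$. I expect this to be the main, if routine, obstacle: the cancellation hinges on recognising the right groupings and on applying (\ref{PR}) in two conjugated guises, namely $A\Theta+\Theta A^\rT=-\mho$ and $A+\Theta A^\rT\Theta^{-1}=-\mho\Theta^{-1}$, which reduce the expression to $-\mho A^\rT\Theta^{-1}+N\mho\Theta^{-1}=0$ since $N\mho=\mho A^\rT$.

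The boundary conditions (\ref{g+0_g-T}) then fall out by evaluating the endpoints at which one of the one-sided integrals degenerates: $a(0)=0$ yields $g'(0)=-\Theta A^\rT b(0)=-\Theta A^\rT\Theta^{-1}g(0)$, and $b(T)=0$ yields $g'(T)=Aa(T)=Ag(T)$. Finally, to show that $0$ is not an eigenvalue I would argue by injectivity: if $g=\cL(f)=0$ then $g'=g''=0$, so (\ref{ODE}) forces $\mho f=0$ and hence $f=0$ almost everywhere, as $\mho$ is nonsingular. Throughout, differentiation under the integral sign is justified by the smoothness of $\re^{\tau A}$ away from $\tau=0$ together with the local integrability of $f$, the diagonal jump being handled as above.
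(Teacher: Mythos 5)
Your proof is correct and follows essentially the same route as the paper: the same splitting of the kernel at the diagonal into $g_+(s)=\int_0^s\Lambda(s-t)f(t)\rd t$ and $\Theta g_-(s)$, the same two differentiations with the cancellation of $\pm\Theta f$ in $g'$, and the same use of the PR condition (\ref{PR}) to produce the jump term $-\mho f$ and the injectivity argument for the absence of a zero eigenvalue. The only divergence is the final step: the paper inverts the block system (\ref{set1}) via the Schur complement (where $\det\mho\ne 0$ enters) and substitutes $g_\pm$ back in terms of $g,g'$, whereas you verify directly that $g''+(N-A)g'-NAg$ with $N=\mho A^{\rT}\mho^{-1}$ annihilates the coefficients of $a$ and $\Theta b$ --- both cancellations check out, and your endpoint evaluation of the boundary conditions is, if anything, slightly more direct than the paper's use of (\ref{set2}).
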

\begin{proof}
In view of the structure of the kernel function (\ref{Lambda}), the operator $\cL$ in (\ref{fg}) acts from $L^2([0,T],\mC^n)$ to the subspace of twice differentiable functions with square integrable second-order derivatives. Indeed, similarly to the Wiener-Hopf method, the image $g=\cL(f)$ is represented as
\begin{equation}
\label{ggg}
  g(s) = g_+(s) + \Theta g_-(s),
  \qquad
  0\< s\< T,
\end{equation}
where
\begin{equation}
\label{g+-}
    g_+(s)
    :=
    \int_0^s
    \Lambda(s-t)
    f(t)\rd t,
    \qquad
    g_-(s)
    :=
    \int_s^T
    \re^{(t-s)A^\rT}
    f(t)\rd t
\end{equation}
are absolutely continuous functions satisfying the boundary conditions
\begin{equation}
\label{g+-0}
    g_+(0) = g_-(T) = 0.
\end{equation}
By using the matrix exponential structure of (\ref{Lambda}) again, differentiation of (\ref{g+-})  yields
\begin{equation}
\label{g+-'}
    g_+' = \Theta f + Ag_+,
    \qquad
    g_-' = -f - A^\rT g_-,
\end{equation}
and hence, in view of (\ref{ggg}), the function
\begin{equation}
\label{g'}
    g'
    =
    g_+' + \Theta g_-'
    = \Theta f + Ag_+ - \Theta (f + A^\rT g_-)
    =
    Ag_+ - \Theta  A^\rT g_-
\end{equation}
is also absolutely continuous.
The relations (\ref{ggg}), (\ref{g'}) can be represented as
\begin{equation}
\label{set1}
    {\begin{bmatrix}
      I_n & \Theta\\
      A & -\Theta A^\rT
    \end{bmatrix}
    \begin{bmatrix}
      g_+\\
      g_-
    \end{bmatrix}}
    =
    {\begin{bmatrix}
      g\\
      g'
    \end{bmatrix}}.
\end{equation}
The matrix of this set of linear equations is nonsingular since
$
    \det
    {\small\begin{bmatrix}
      I_n & \Theta\\
      A & -\Theta A^\rT
    \end{bmatrix}}
    =
    \det(-\Theta A^\rT - A\Theta) = \det \mho \ne 0
$,
where the Schur complement \cite{HJ_2007} of the block $I_n$ is used together with the PR condition (\ref{PR}) and the assumption (\ref{BJB}). Calculation of the inverse of this matrix  allows (\ref{set1}) to be solved as
\begin{equation}
\label{set2}
    {\begin{bmatrix}
      g_+\\
      g_-
    \end{bmatrix}}
    =
    {\begin{bmatrix}
      I_n + \Theta \mho^{-1} A & -\Theta\mho^{-1}\\
      -\mho^{-1}A & \mho^{-1}
    \end{bmatrix}
    \begin{bmatrix}
      g\\
      g'
    \end{bmatrix}}.
\end{equation}
By differentiating (\ref{g'}) and using (\ref{g+-'}), (\ref{PR}), (\ref{set2}), it follows that
\begin{align}
\nonumber
    g''
    & =
    Ag_+' - \Theta  A^\rT g_-'
    =
    A(\Theta f + Ag_+) - \Theta  A^\rT (-f - A^\rT g_-)\\
\nonumber
    & =
    (A\Theta + \Theta  A^\rT) f + A^2 g_+ + \Theta  (A^\rT)^2 g_-\\
\nonumber
    & =
    -\mho f
    +
    {\begin{bmatrix}
      A^2 &  \Theta  (A^\rT)^2
    \end{bmatrix}
    \begin{bmatrix}
      g_+\\
      g_-
    \end{bmatrix}}    \\
\nonumber
    & =
    -\mho f
    +
    {\begin{bmatrix}
      A^2 &  \Theta  (A^\rT)^2
    \end{bmatrix}
    \begin{bmatrix}
      I_n + \Theta \mho^{-1} A & -\Theta\mho^{-1}\\
      -\mho^{-1}A & \mho^{-1}
    \end{bmatrix}
    \begin{bmatrix}
      g\\
      g'
    \end{bmatrix}}\\
\label{g''}
    & =
    -\mho f + \mho A^\rT \mho^{-1} A g  + (A-\mho A^\rT \mho^{-1}) g',
\end{align}
where use is also made of the identity
$
    A^2 \Theta - \Theta (A^\rT)^2 =
    A A\Theta - \Theta A^\rT A^\rT
    =
    -A(\mho + \Theta A^\rT) + (\mho + A\Theta )A^\rT
    =
    \mho A^\rT - A \mho
$, which follows from the PR condition (\ref{PR}). The ODE (\ref{g''}) is identical to (\ref{ODE}), while the boundary conditions (\ref{g+0_g-T}) are obtained from (\ref{g+-0}) by using (\ref{set2}) along with the identity $A + \mho \Theta^{-1} = (A\Theta + \mho) \Theta^{-1} = -\Theta A^\rT \Theta^{-1}$ which follows from (\ref{PR}). If $0$ were an eigenvalue of the operator (\ref{fg}), then there would exist a function $f$, not identically zero and satisfying (\ref{ODE}) with $g=0$, so that $\mho f = 0$ would hold almost everywhere  on the time interval $[0,T]$, thus leading to a contradiction   because of (\ref{BJB}).
\end{proof}

Since the CCR matrix $\Theta$ is assumed to be nonsingular, (\ref{BJB}) is equivalent to
\begin{equation}
\label{MJM}
  \det (M^\rT J M) \ne 0
\end{equation}
in view of the parameterisation of the matrix $B$ in (\ref{AB}) in terms of the coupling matrix $M$. A necessary condition for (\ref{MJM}) is that $M$ is of full column rank: $M^\rT M \succ 0$, and hence, $m\> n$ (that is, the number of oscillator modes does not exceed the number of external field channels).

Theorem~\ref{thm:BVP} characterizes the kernel $\Lambda$ in (\ref{Lambda}) as the Green function for the BVP (\ref{ODE}), (\ref{g+0_g-T}). More precisely, the ODE (\ref{ODE}) is equivalent to
\begin{equation}
\label{ODE1_F}
    {\begin{bmatrix}
      g\\
      g'
    \end{bmatrix}}'
    =
    F
    {\begin{bmatrix}
      g\\
      g'
    \end{bmatrix}}
    -
    {\begin{bmatrix}
      0\\
      \mho
    \end{bmatrix}}
    f,
    \qquad
    F:=
    {\begin{bmatrix}
      0 &  I_n\\
      \mho A^\rT \mho^{-1} A & A - \mho A^\rT \mho^{-1}
    \end{bmatrix}}.
\end{equation}
Hence, the solution of the BVP (\ref{ODE}), (\ref{g+0_g-T}) is given by
\begin{equation}
\label{gsol}
    g(s)
    =
    \begin{bmatrix}
      I_n & 0
    \end{bmatrix}
    \Big(
    \re^{sF}
    V
    g(0)
    -
    \int_0^s
    \re^{(s-t)F}
    {\begin{bmatrix}
      0\\
      \mho
    \end{bmatrix}}
    f(t)
    \rd t
    \Big)
\end{equation}
for all $0\< s \< T$,
where $g(0)$ is found uniquely from the terminal condition in (\ref{g+0_g-T}) as
\begin{equation}
\label{g0}
    g(0)
    =
    G(T)^{-1}
    U
    \int_0^T
    \re^{(T-t)F}
    {\begin{bmatrix}
      0\\
      \mho
    \end{bmatrix}}
    f(t)
    \rd t,
\end{equation}
with
\begin{equation}
\label{G}
    G(T):=
    U
    \re^{TF}
    V.
\end{equation}
Here, the matrices
\begin{equation}
\label{UV}
    U
    :=
        {\begin{bmatrix}
      A & - I_n
    \end{bmatrix}},
    \qquad
    V
    :=
    {\begin{bmatrix}
      I_n\\
      -\Theta A^\rT \Theta^{-1}
    \end{bmatrix}}
\end{equation}
come from the terminal and initial conditions in (\ref{g+0_g-T}), respectively.
Substitution of (\ref{g0}) into (\ref{gsol}) and comparison of the result with (\ref{fg}) lead to
\begin{equation}
\label{Lam}
    \Lambda(s-t)
    =
    {\begin{bmatrix}
      I_n & 0
    \end{bmatrix}}
    \big(
    \re^{sF}
    V
    G(T)^{-1}
    U
    \re^{(T-t)F}
    -
    \chi_{[0,s]}(t)
    \re^{(s-t)F}
    \big)
    {\begin{bmatrix}
      0\\
      \mho
    \end{bmatrix}}
\end{equation}
for all $0\< s, t\< T$, where $\chi_C(\cdot)$ is the indicator function for a set $C$. The relation (\ref{g0}) and its corollary  (\ref{Lam}) use
the nonsingularity of the matrix $G(T) \in \mR^{n\x n}$ in (\ref{G}), which is discussed below.

\begin{lem}
\label{lem:Ggood}
  Under the condition (\ref{BJB}), the matrix $G(T)$ in (\ref{G}) is nonsingular for any time horizon $T\>0$. \hfill$\square$
\end{lem}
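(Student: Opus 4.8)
The plan is to diagonalise the coefficient matrix $F$ from (\ref{ODE1_F}) by means of the very block matrix occurring on the left-hand side of (\ref{set1}) --- call it $P$, the matrix with block-rows $[\,I_n\ \ \Theta\,]$ and $[\,A\ \ {-\Theta A^\rT}\,]$ --- whose inverse $P^{-1}$ is displayed in (\ref{set2}) and whose determinant equals $\det\mho$ (as noted after (\ref{set1})), so that $P$ is nonsingular under (\ref{BJB}). First I would verify the intertwining relation $FP = P\,\blockdiag(A,-A^\rT)$ by direct block multiplication; every required cancellation --- the vanishing of one off-diagonal block and the equality of the lower-right blocks $\mho A^\rT\mho^{-1}A\Theta - (A-\mho A^\rT\mho^{-1})\Theta A^\rT = \Theta(A^\rT)^2$ --- reduces to the PR condition (\ref{PR}) in the forms $A\Theta + \Theta A^\rT = -\mho$ and $\mho + A\Theta = -\Theta A^\rT$ already exploited in the proof of Theorem~\ref{thm:BVP}. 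This gives $F = P\,\blockdiag(A,-A^\rT)\,P^{-1}$ and hence $\re^{TF} = P\,\blockdiag(\re^{TA},\re^{-TA^\rT})\,P^{-1}$.

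Substituting this into (\ref{G}), the task reduces to computing the two boundary factors $UP$ and $P^{-1}V$, with $U$, $V$ from (\ref{UV}) and $P^{-1}$ from (\ref{set2}). I expect $UP = [\,0\ \ {-\mho}\,]$, where the first block cancels identically and the second again uses $A\Theta + \Theta A^\rT = -\mho$; and $P^{-1}V = \big[\begin{smallmatrix} 0 \\ \Theta^{-1}\end{smallmatrix}\big]$, where both the vanishing of the upper block and the value $\Theta^{-1}$ of the lower block follow from the rearrangement $A + \Theta A^\rT\Theta^{-1} = -\mho\Theta^{-1}$ of (\ref{PR}). Since $UP$ is supported on its second block and $P^{-1}V$ on its second block, the middle factor $\blockdiag(\re^{TA},\re^{-TA^\rT})$ collapses onto its lower-right corner, leaving the explicit closed form $G(T) = -\mho\,\re^{-TA^\rT}\,\Theta^{-1}$.

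This closed form settles the lemma at once: it is a product of three nonsingular matrices, since $\mho$ is invertible by (\ref{BJB}), the matrix exponential $\re^{-TA^\rT}$ is invertible for every $T$, and $\Theta$ is nonsingular by standing assumption; hence $G(T)$ is nonsingular for all $T\>0$, as claimed. The proof uses no idea beyond those in Theorem~\ref{thm:BVP}, so the only genuine difficulty is algebraic bookkeeping: one must invoke the correct equivalent form of (\ref{PR}) --- $A\Theta+\Theta A^\rT=-\mho$, $\mho+A\Theta=-\Theta A^\rT$, or $A+\Theta A^\rT\Theta^{-1}=-\mho\Theta^{-1}$ --- at each cancellation, and I expect keeping these straight to be the main (purely computational) obstacle.
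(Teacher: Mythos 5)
Your proof is correct: I have checked the intertwining relation $FP=P\,\blockdiag(A,-A^{\rT})P^{-1}P$, the boundary products $UP=[\,0\ \ {-\mho}\,]$ and $P^{-1}V=\bigl[\begin{smallmatrix}0\\ \Theta^{-1}\end{smallmatrix}\bigr]$, and each cancellation does reduce to the stated forms of the PR condition (\ref{PR}); the resulting closed form $G(T)=-\mho\,\re^{-TA^{\rT}}\Theta^{-1}$ is visibly nonsingular under (\ref{BJB}) and $\det\Theta\neq 0$. The paper takes a lighter but closely related route: instead of a full block diagonalisation of $F$, it establishes only the one-sided intertwining identity $UF=-\mho A^{\rT}\mho^{-1}U$ (which is in fact a consequence of your relations, since $U=[\,0\ \ {-\mho}\,]P^{-1}$ and $P^{-1}F=\blockdiag(A,-A^{\rT})P^{-1}$), deduces the linear ODE $G'(T)=-\mho A^{\rT}\mho^{-1}G(T)$, and then invokes the Liouville/Abel determinant formula to get $\det G(T)=\re^{-T\Tr A}\det G(0)$ with $G(0)=UV=-\mho\Theta^{-1}$. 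The paper's argument needs only $\det G(T)\neq 0$ and avoids verifying the second-row intertwining blocks; yours costs a little more bookkeeping but buys an explicit formula for $G(T)$ (and hence for $G(T)^{-1}=-\Theta\,\re^{TA^{\rT}}\mho^{-1}$), which would make the Green-function expression (\ref{Lam}) fully explicit. Note that integrating the paper's ODE rather than just taking determinants reproduces exactly your closed form, so the two arguments are consistent.
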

\begin{proof}
The matrices $F$, $U$ in (\ref{ODE1_F}), (\ref{UV}) satisfy
$    UF =
    \begin{bmatrix}
      -\mho A^\rT \mho^{-1} A &
      \mho A^\rT \mho^{-1}
    \end{bmatrix}
    =
    -\mho A^\rT \mho^{-1} U
$.
In combination with (\ref{G}),
this identity leads to the linear ODE
$
    G'(T) = UF\re^{TF}V = -\mho A^\rT \mho^{-1} G(T)
$
for any $T\>0$, and hence,
$
    \det G(T) = \re^{-T\Tr(\mho A^\rT \mho^{-1})} G(0) = \re^{-T\Tr A} G(0)
$
due to the invariance of the matrix trace under similarity transformations. Therefore, the matrix $G(T)$ inherits nonsingularity from
$
  G(0) = UV = A+\Theta A^\rT \Theta^{-1} = (A\Theta + \Theta A^\rT)\Theta^{-1} = -\mho \Theta^{-1}
$,
where (\ref{UV}), (\ref{PR}) are used together with (\ref{BJB}) and the condition $\det \Theta \ne 0$.
\end{proof}

Since the kernel (\ref{Lambda}) is continuous,
the image of any bounded subset of $L^2([0,T], \mC^n)$ under the operator $\cL$ in  (\ref{fg}) consists of uniformly bounded equicontinuous functions.\footnote{Upper bounds for the functions $|g|$,  $|g'|$ in (\ref{fg}) in terms of $\|f\|$ can also be obtained by using the proof of Theorem~\ref{thm:BVP}.} By the Arzela-Ascoli theorem \cite{RS_1980}, this implies relative compactness of the image set in the sense of the uniform norm and hence, in $L^2([0,T], \mC^n)$, whereby $\cL$ is compact.

\section{\bf Eigenbasis for the two-point commutator kernel}
\label{sec:eig}

Since the operator $\cL$ in (\ref{fg}) is skew self-adjoint, all  its eigenvalues are purely imaginary. They are symmetric about the origin, since  for any eigenfunction $f: [0,T]\to \mC^n$  of this operator with an eigenvalue $i\omega$, where $\omega\in \mR$ (so that $\cL(f) = i\omega f$), the function $\overline{f}$ satisfies $\cL(\overline{f}) = -i\omega \overline{f}$ and is, therefore,  an eigenfunction with the eigenvalue $-i\omega$.
Now, let $f$ be an eigenfunction of $\cL$ with the eigenvalue $i\omega$, where $\omega >0$ without loss of generality under the condition (\ref{BJB}) will be referred to as an eigenfrequency of $\cL$.  Substitution of  $g:= i\omega f$ into (\ref{ODE}), (\ref{g+0_g-T}), represents the eigenvalue problem as the  BVP
\begin{equation}
\label{ODEeig}
  f'' + (\mho A^\rT \mho^{-1}-A) f'- \mho A^\rT \mho^{-1} A f   = \frac{i}{\omega}\mho f
\end{equation}
over the time interval $[0,T]$ with the boundary conditions
\begin{equation}
\label{g+0_g-Teig}
f'(0)  = -\Theta A^\rT \Theta^{-1} f(0),
\qquad
f'(T)  = A f(T).
\end{equation}
In accordance with (\ref{ODE1_F}), the ODE (\ref{ODEeig}) is represented as
\begin{equation}
\label{ODE2}
    {\begin{bmatrix}
      f\\
      f'
    \end{bmatrix}}'
    =
    D(\omega)
    {\begin{bmatrix}
      f\\
      f'
    \end{bmatrix}},
\end{equation}
where
\begin{equation}
\label{D}
    D(\omega)
    :=
    F
    +
    \tfrac{i}{\omega}
    {\begin{bmatrix}
      0 &  0\\
      \mho & 0
    \end{bmatrix}}
\end{equation}
is an appropriate modification of the matrix $F$ in (\ref{ODE1_F}).
By using the matrix exponential for the solution of (\ref{ODE2}) with the initial condition
\begin{equation}
\label{f0}
    {\begin{bmatrix}
      f(0)\\
      f'(0)
    \end{bmatrix}}
    =
    V
    f(0),
\end{equation}
with $V$ from (\ref{UV}),
it follows that the terminal condition in (\ref{g+0_g-Teig}) is equivalent to
\begin{equation}
\label{Ef}
    E(\omega)
    f(0) = 0,
\end{equation}
with
\begin{equation}
\label{E}
    E(\omega)
    :=
    U
    \re^{T D(\omega)}
    V.
\end{equation}
The existence of $f(0) \in \ker E(\omega)\setminus \{0\}$ in (\ref{Ef}) is equivalent to 
 $    \det E(\omega) = 0$ 
for the eigenvalues $i\omega$ of the operator $\cL$ in (\ref{fg}). Note that $\lim_{\omega \to +\infty} \det E(\omega) = \det G(T)\ne 0$ in view of (\ref{E}), (\ref{D}), (\ref{G}) and Lemma~\ref{lem:Ggood}, in accordance with the boundedness of the operator $\cL$.
Any $f(0)$ in (\ref{Ef}) gives rise to the eigenfunction
\begin{equation}
\label{ff}
    f(t) =
    {\begin{bmatrix}
        I_n & 0
    \end{bmatrix}}
    \re^{tD(\omega)}
    V
    f(0),
    \quad
    0\< t \< T,
\end{equation}
in view of (\ref{f0}) (with $\dim \ker E(\omega)$ being the dimension of the corresponding subspace of eigenfunctions in $L^2([0,T],\mC^n)$).  The functions
$
    \varphi:= \Re f$ and
$
    \psi := \Im f
$,
associated with (\ref{ff}) (with the real and imaginary parts being taken entrywise),
belong to the real subspace $L^2([0,T], \mR^n)$  and are related by
\begin{equation}
\label{pairs}
    \cL(\varphi) = -\omega \psi,
    \qquad
    \cL(\psi) = \omega \varphi.
\end{equation}
Since $\bra\cdot, \cdot\ket$ is symmetric on $L^2([0,T], \mR^n)$, and the operator $\cL$ in (\ref{fg}) is skew self-adjoint, then, in view of $\omega \ne 0$, it follows from (\ref{pairs})  that
$
    \|\varphi\| = \|\psi\|$ and
$
    \bra \varphi, \psi\ket = 0
$.
This property is equivalent to $\bra \overline{f},f\ket  = 0$. Moreover, if $f$, $h$ are two eigenfunctions with eigenvalues $i\omega$ and $i\mu$, so that $\cL(f) = i\omega f$ and $\cL(h) = i\mu h$, then $\omega\ne \mu$ implies $\bra f,h\ket = 0$, while the fulfillment of $\omega >0$ and $\mu>0$ implies $\bra \overline{f}, h\ket = 0$. The eigenfunctions with a common eigenvalue are orthonormalised by the Gram-Schmidt procedure \cite{RS_1980}. The resulting eigenfunctions
\begin{equation}
\label{eigff}
    f_k = \varphi_k + i\psi_k,
    \qquad
    \overline{f_k} = \varphi_k - i\psi_k
\end{equation}
of the operator $\cL$ in (\ref{fg})
with
\begin{equation}
\label{reim}
  \varphi_k := \Re f_k,
  \qquad
  \psi_k := \Im f_k,
\end{equation}
and the corresponding eigenvalues $\pm i \omega_k $ and eigenfrequencies $\omega_k>0$ (labeled by positive integers $k \in \mN:= \{1,2,3,\ldots\}$),  satisfy
\begin{equation}
\label{ffff}
    \bra f_j,f_k\ket = \delta_{jk},
    \qquad
    \bra \overline{f_j},f_k\ket = 0,
\end{equation}
or equivalently,
\begin{equation}
\label{reimort}
    \bra \varphi_j,\varphi_k\ket
    =
    \bra \psi_j,\psi_k\ket
    =
    \tfrac{1}{2}\delta_{jk},
    \qquad
    \bra \varphi_j,\psi_k\ket = 0
\end{equation}
for all  $
    j,k\in \mN$, with $\delta_{jk}$ the Kronecker delta. An equivalent form of (\ref{reimort}) is
\begin{equation}
\label{hh}
  \int_0^T
  h_j(t)^\rT h_k(t)
  \rd t
  =
  {\begin{bmatrix}
    \bra \varphi_j, \varphi_k\ket & \bra \varphi_j, \psi_k\ket\\
    \bra \psi_j, \varphi_k\ket & \bra \psi_j, \psi_k\ket
  \end{bmatrix}}
  =
  \tfrac{1}{2}\delta_{jk}
  I_2,
\end{equation}
where the functions $h_k \in L^2([0,T], \mR^{n\x 2})$ are formed from the $\mR^n$-valued functions in (\ref{reim}) as
\begin{equation}
\label{hk}
    h_k:=
    {\begin{bmatrix}
      \varphi_k &
      \psi_k
    \end{bmatrix}}.
\end{equation}
The kernel
(\ref{Lambda}) admits the following expansion over the orthonormal eigenfunctions (\ref{eigff}):
\begin{align}
\nonumber
    \Lambda(s-t)
    & =
    i
    \sum_{k=1}^{+\infty}
    \omega_k
    (f_k(s)f_k(t)^*-\overline{f_k(s)}f_k(t)^\rT)\\
\nonumber
    & = -2
    \sum_{k=1}^{+\infty}
    \omega_k
    \Im(f_k(s)f_k(t)^*),\\
\nonumber
    & =
    2
    \sum_{k=1}^{+\infty}
    \omega_k
    (\varphi_k(s)\psi_k(t)^\rT-\psi_k(s)\varphi_k(t)^\rT)\\
\label{Lambdaseries}
    & =
    2
    \sum_{k=1}^{+\infty}
    \omega_k
    h_k(s) \bJ h_k(t)^\rT,
    \qquad
    0\< s,t\< T,
\end{align}
where the matrix $\bJ$ is given by (\ref{bJ}). This expansion
is similar to the Mercer representation \cite{KZPS_1976} of positive semi-definite self-adjoint kernels.  Since $-i\cL$ is a self-adjoint operator on $L^2([0,T], \mC^n)$ with the kernel $-i\Lambda$ and the corresponding eigenvalues $\pm \omega_k$, then, in accordance with (\ref{Lambdaseries}), (\ref{hh}), the squared  Hilbert-Schmidt norm \cite{RS_1980}  of the operator $\cL$ in (\ref{fg}) is
\begin{equation}
\label{HSnorm}
    \|\cL\|_{\rHS}^2
    =
    \int_{[0,T]^2}
    \|\Lambda(s-t)\|_\rF^2
    \rd s\rd t
    =
    2\sum_{k=1}^{+\infty}
    \omega_k^2
    =
    -
    \Tr (\cL^2),
\end{equation}
where $\|N\|_\rF := \sqrt{\Tr (N^*N)}$ is the Frobenius norm of a real or complex matrix $N$. In accordance with its trace in (\ref{HSnorm}), the positive definite self-adjoint operator $-\cL^2$ has the eigenvalues $\omega_k^2$ of multiplicity $2$.

 \section{\bf Quantum Karhunen-Loeve expansion using the two-point commutator kernel eigenbasis}
\label{sec:QKL}

Under the assumption of Theorem~\ref{thm:BVP}, 
with the system variables of the OQHO over the time interval $[0,T]$, we associate a sequence of quantum variables
\begin{equation}
\label{gammak}
  \gamma_k
  :=
  \tfrac{1}{\sqrt{\omega_k}}
  \int_0^T
  f_k(t)^\rT
  X(t)
  \rd t
  =
  \xi_k + i\eta_k
\end{equation}
on the system-field space $\fH$. Here,
the eigenfunctions (\ref{eigff}) of the two-point CCR kernel (\ref{Lambda}) are used together with (\ref{reim}) and the eigenfrequencies $\omega_k$, and
\begin{align}
\label{xik}
    \xi_k
    & :=
    \Re
    \gamma_k
    =
  \tfrac{1}{\sqrt{\omega_k}}
  \int_0^T
  \varphi_k(t)^\rT
  X(t)
  \rd t,\\
\label{etak}
    \eta_k
    & :=
    \Im
    \gamma_k
    =
  \tfrac{1}{\sqrt{\omega_k}}
  \int_0^T
  \psi_k(t)^\rT
  X(t)
  \rd t,
\end{align}
where the real and imaginary parts are extended from $\mC$ 
to quantum variables as $\Re z := \frac{1}{2} (z + z^\dagger)$, $\Im z := \frac{1}{2i} (z - z^\dagger)$, with $(\cdot)^\dagger$ the operator adjoint. The quantum variables (\ref{gammak}) satisfy
\begin{align}
\nonumber
    [\gamma_j,\gamma_k^\dagger]
    &=
  \tfrac{1}{\sqrt{\omega_j\omega_k}}
  \int_{[0,T]^2}
  f_j(s)^\rT
  [X(s),X(t)^\rT]
  \overline{f_k(t)}
  \rd s\rd t\\
\nonumber
  & =
  \tfrac{2i}{\sqrt{\omega_j\omega_k}}
  \int_{[0,T]^2}
  f_j(s)^\rT
  \Lambda(s-t)
  \overline{f_k(t)}
  \rd s\rd t\\
\nonumber
  & =
  \tfrac{2}{\sqrt{\omega_j\omega_k}}
  \int_{[0,T]^2}
  f_j(s)^\rT
    \sum_{\ell=1}^{+\infty}
    \omega_\ell
    (\overline{f_\ell(s)}f_\ell(t)^\rT
    -
    f_\ell(s)f_\ell(t)^*)
  \overline{f_k(t)}
  \rd s\rd t\\
\nonumber
  & =
  \tfrac{2}{\sqrt{\omega_j\omega_k}}
    \sum_{\ell=1}^{+\infty}
    \omega_\ell
    (
        \bra \overline{f_j}, \overline{f_\ell}\ket
        \bra \overline{f_\ell}, \overline{f_k}\ket
        -
        \bra \overline{f_j}, f_\ell\ket
        \bra f_\ell, \overline{f_k}\ket
    )\\
\label{gg+}
  & =
  \tfrac{2}{\sqrt{\omega_j\omega_k}}
    \sum_{\ell=1}^{+\infty}
    \omega_\ell
        \overline{\bra f_j, f_\ell\ket
        \bra f_\ell, f_k\ket}
        =
        2\delta_{jk},
        \qquad
        j,k\in \mN,
\end{align}
where use is made of (\ref{XXcomm}), (\ref{Lambdaseries}),  (\ref{ffff}).
By a similar reasoning,
\begin{equation}
\label{gg}
    [\gamma_j,\gamma_k]
    =
  \tfrac{2}{\sqrt{\omega_j\omega_k}}
    \sum_{\ell=1}^{+\infty}
    \omega_\ell
    (
        \bra \overline{f_j}, \overline{f_\ell}\ket
        \bra \overline{f_\ell}, f_k\ket
        -
        \bra \overline{f_j}, f_\ell\ket
        \bra f_\ell, f_k\ket
    )
    =
    0,
\end{equation}
and hence,
\begin{equation}
\label{g+g+}
    [\gamma_j^\dagger,\gamma_k^\dagger]
    =
    -
    [\gamma_j,\gamma_k]^\dagger
    =0
\end{equation}
for all $j,k\in \mN$.  The CCRs (\ref{gg+})--(\ref{g+g+}) show that 
$\gamma_k$ in (\ref{gammak}) are organised as pairwise commuting annihilation operators (with $\gamma_k^\dagger = \xi_k-i\eta_k$ the corresponding  creation operators), so that the self-adjoint quantum variables $\xi_k$, $\eta_k$ in (\ref{xik}), (\ref{etak}) are conjugate pairs of the quantum mechanical  positions and momenta mentioned in Section~\ref{sec:sys}, with
\begin{equation}
\label{xietacomm}
  [\xi_j,\xi_k] = 0,
  \qquad
  [\eta_j,\eta_k] = 0,
  \qquad
  [\xi_j,\eta_k] = i \delta_{jk}
\end{equation}
for all $j,k\in \mN$.
Accordingly, the vectors
\begin{equation}
\label{zetak}
  \zeta_k
  :=
  {\begin{bmatrix}
    \xi_k\\
    \eta_k
  \end{bmatrix}}
\end{equation}
commute between each other and have a common CCR matrix $\frac{1}{2}\bJ$, since 
$    [\zeta_j,\zeta_k^\rT]
    =
    i \delta_{jk}
    \bJ$, 
where (\ref{xietacomm}) is used together with (\ref{bJ}). In view of the orthonormality of the  eigenbasis (\ref{eigff}), the system variables can be represented in terms of
(\ref{gammak}) as
\begin{align}
\nonumber
    X(t)
    &=
    \sum_{k=1}^{+\infty}
    \sqrt{\omega_k}
    (\overline{f_k(t)}
    \gamma_k
    +
    f_k(t)
    \gamma_k^\dagger
    )
     =
    2
    \sum_{k=1}^{+\infty}
    \sqrt{\omega_k}
    \Re
    (    f_k(t)
    \gamma_k^\dagger
    )\\
\label{XQKL}
    & =
    2
    \sum_{k=1}^{+\infty}
    \sqrt{\omega_k}
    (\varphi_k(t)\xi_k + \psi_k(t)\eta_k)
    =
    2
    \sum_{k=1}^{+\infty}
    \sqrt{\omega_k}
    h_k(t)
    \zeta_k
\end{align}
for all     $0\< t\< T$,
where the coefficients $\sqrt{\omega_k}\zeta_k$ involve the pairs of noncommuting positions and momenta $\xi_k$, $\eta_k$ from (\ref{zetak}), and (\ref{hk}) is used. Extending the results of \cite{VJP_2019} to the multimode case, (\ref{XQKL}) is a QKL 
expansion of the system variables over the eigenbasis of their two-point CCR kernel.

\section{\bf Quadratic-exponential functionals for system variables}
\label{sec:quadro}

Consider an integral-of-quadratic function 
of the system variables over the time interval $[0,T]$   
given by a positive semi-definite self-adjoint quantum variable
\begin{equation}
\label{Q}
    Q
    :=
    \int_0^T
    X(t)^{\rT} X(t)
    \rd t
    =
    \int_0^T
    \sum_{k=1}^n
    X_k(t)^2
    \rd t.
\end{equation}
A more general form
$\int_0^T X(t)^{\rT} \Pi X(t) \rd t$ of such a function, specified by a real positive  definite symmetric matrix $\Pi$ of order $n$, is reduced to (\ref{Q}) by transforming the OQHO as in (\ref{XSX}), (\ref{new}) with $S:= \sqrt{\Pi}$. The performance criteria in linear quadratic Gaussian control and filtering problems \cite{MJ_2012,NJP_2009,ZJ_2012} are concerned with the minimization of quadratic costs $\bE Q := \Tr(\rho Q)$, where $\bE(\cdot)$ is the quantum expectation over an underlying density operator $\rho$ on the system-field space $\fH$. However, imposition of an exponential  penalty on $Q$ in (\ref{Q}) through the QEF 
\cite{VPJ_2018a}
\begin{equation}
\label{Xi}
    \Xi
    :=
    \bE \re^{\frac{\theta}{2} Q},
\end{equation}
as a risk-sensitive cost to be minimised (with $\theta>0$ quantifying the risk sensitivity), leads to  additional properties of the quantum system (the $\tfrac{1}{2}$ factor in (\ref{Xi}) is introduced for convenience). These properties  pertain to the large deviations of quantum trajectories \cite{VPJ_2018a} and robustness to quantum statistical uncertainties with a relative entropy description  \cite{OW_2010,YB_2009}   with respect to the nominal system-field state \cite{VPJ_2018b}.

Consider a representation of the quadratic-exponential function $\re^{\frac{\theta}{2}Q}$ of the system variables in (\ref{Xi})   in a form which is valid irrespective of a particular quantum state $\rho$.  To this end,
by substituting (\ref{XQKL}) into (\ref{Q}) and using (\ref{hh}), it follows that
\begin{align}
\nonumber
      Q
    & =
    4
    \sum_{j,k=1}^{+\infty}
    \sqrt{\omega_j\omega_k}
    \zeta_j^\rT
    \int_0^T
    h_j(t)^\rT
    h_k(t)
    \rd t
    \zeta_k\\
\label{Q1}
    & = 2
    \sum_{k=1}^{+\infty}
    \omega_k
    \zeta_k^\rT
    \zeta_k
    =
    2
    \sum_{k=1}^{+\infty}
    \omega_k
    (\xi_k^2 + \eta_k^2).
\end{align}
The reduction to a single series for $Q$ has been made possible by the mutual orthogonality of the real and imaginary parts of the eigenfunctions  in (\ref{reimort}). A combination of (\ref{Q1}) with
Theorem~\ref{th:fact} of Appendix leads to
\begin{align}
\nonumber
    \re^{\frac{\theta}{2}Q}
    & =
    \prod_{k=1}^{+\infty}
    \re^{\theta \omega_k (\xi_k^2 + \eta_k^2)}
    =
    \prod_{k=1}^{+\infty}
    \big(
    \tfrac{1}{\cosh (\theta \omega_k)}
    \bM
    \re^{\sigma_k(\alpha_k \xi_k + \beta_k \eta_k)}
    \big)\\
\label{eQ}
    & =
    \bM
    \re^{\sum_{k=1}^{+\infty} (\sigma_k(\alpha_k \xi_k + \beta_k \eta_k) - \ln \cosh (\theta\omega_k))}
    =
    \bM \re^{\Sigma-C},
\end{align}
where the commutativity between the position-momentum pairs $(\xi_k,\eta_k)$ for different $k$ is also used.  Here, $\bM(\cdot)$ is the classical expectation over
mutually independent  standard normal (Gaussian with zero mean and unit variance) random variables $\alpha_k$, $\beta_k$, and
\begin{equation}
\label{uvvarsk}
    \sigma_k :=  \sqrt{2\tanh (\theta \omega_k)}
\end{equation}
are related to the eigenfrequencies $\omega_k$ of the operator $\cL$.
Also,
\begin{equation}
\label{Sig}
    \Sigma
    :=
  \sum_{k=1}^{+\infty}
  \sigma_k(\alpha_k \xi_k + \beta_k\eta_k)
\end{equation}
in (\ref{eQ}) is a self-adjoint quantum variable which depends parametrically  (and in a linear fashion) on the classical random variables $\alpha_k$, $\beta_k$. Furthermore,
\begin{equation}
\label{C}
  C
  :=
  \sum_{k=1}^{+\infty}\ln \cosh (\theta \omega_k)
  =
  \sum_{k=1}^{+\infty}\ln \cos(i \theta \omega_k)
  =
  \tfrac{1}{2}
  \Tr \ln\cos (\theta \cL)
\end{equation}
is a nonnegative quantity whose finiteness is secured by that of the Hilbert-Schmidt norm in (\ref{HSnorm}).
Substitution of (\ref{xik}), (\ref{etak}) into (\ref{Sig}) leads to
\begin{equation}
\label{SZ}
    \Sigma
    =
  \sum_{k=1}^{+\infty}
    \tfrac{\sigma_k}{\sqrt{\omega_k}}
    \int_0^T
  (\alpha_k
  \varphi_k(t)
  +
  \beta_k
  \psi_k(t)
  )^\rT
  X(t)
  \rd t
    =
    \sqrt{\theta}
    \int_0^T
    X(t)^\rT
    \rd Z(t),
\end{equation}
where, in view of (\ref{uvvarsk}),
\begin{align}
\nonumber
    Z(t)
    & :=
    \sum_{k=1}^{+\infty}
    \sqrt{2\tanhc (\theta\omega_k)}
    \int_0^t
    (\alpha_k \varphi_k(\tau)
    +
    \beta_k \psi_k(\tau)    )
    \rd \tau\\
\label{Z}
    & =
    \sum_{k=1}^{+\infty}
    \sqrt{\tanhc (\theta\omega_k)}\,
    H_k(t)
    {\begin{bmatrix}
      \alpha_k\\
      \beta_k
    \end{bmatrix}},
    \qquad
    0\< t\< T,
\end{align}
is a classical $\mR^n$-valued Gaussian random process with zero mean. Here,  $H_k \in L^2([0,T], \mR^{n\x 2})$ are auxiliary  functions associated with (\ref{hk}) by
\begin{equation}
\label{Hk}
  H_k(t) := \sqrt{2}\int_0^t h_k(\tau) \rd \tau,
  \qquad
  0\< t \< T,
\end{equation}
and use is also made of the hyperbolic version $\tanhc z := \tanc (-iz)$ of the function $\tanc z := \frac{\tan z}{z}$ (extended to $1$ at $z=0$ by continuity). Since the eigenbasis being used satisfies (\ref{hh}), a combination of (\ref{Hk}) with the Plancherel identity leads to
\begin{equation}
\label{HH}
  \sum_{k=1}^{+\infty}
  H_k(s)H_k(t)^\rT = \min (s,t)I_n,
  \qquad
  0\< s,t\< T,
\end{equation}
which describes the covariance function of a standard Wiener process $V$ in $\mR^n$.  The process $Z$ in (\ref{Z}) has a different covariance function
\begin{equation}
\label{K}
  \bM(Z(s)Z(t)^\rT)
  =
  \sum_{k=1}^{+\infty}
  \tanhc (\theta\omega_k)\,
  H_k(s)H_k(t)^\rT,
  \qquad
  0\< s,t\< T,
\end{equation}
which is majorised by (\ref{HH}) in the sense that
$
    \bM (\bra f, Z\ket^2) 
    \<
    \int_{[0,T]^2}
    \min(s,t)f(s)^\rT f(t)\rd s\rd t
    =
    \bM (\bra f, V\ket^2)
$
for any $f \in L^2([0,T],\mR^n)$ since the function $\tanhc$ does not exceed $1$ on the real axis. Note that $\tanhc (\theta\omega_k)$ are the eigenvalues (of  multiplicity $2$) of the positive definite self-adjoint operator
\begin{equation}
\label{cK}
    \cK
    :=
    \tanhc(i\theta \cL) = \tanc (\theta\cL),
\end{equation}
which is the covariance  operator for the increments of the process $Z$:
\begin{align}
\nonumber
    \bM
    \Big(\int_0^T f(t)^\rT \rd Z(t)\Big)^2
    & =
    \bra f, \cK(f)\ket \\
\label{iso}
    & \<
    \|f\|^2
    =
    \bM\Big(\int_0^T f(t)^\rT \rd V(t)\Big)^2
\end{align}
for any $f \in L^2([0,T], \mR^n)$. Here, the inequality follows from $\cK \preccurlyeq \cI$, with $\cI$ the identity operator on $L^2([0,T], \mR^n)$, while   the last equality in (\ref{iso}) is the Ito isometry.
Reassembling (\ref{C}), (\ref{SZ}) in (\ref{eQ}) leads to
\begin{equation}
\label{eQZ}
  \re^{\frac{\theta}{2}Q}
  =
    \bM \re^{\sqrt{\theta}\int_0^T X(t)^\rT \rd Z(t)  - \frac{1}{2}\Tr \ln \cos (\theta\cL)},
\end{equation}
where the averaging is over the classical Gaussian random process $Z$.
The general structure of this representation is reminiscent of the Doleans-Dade exponential \cite{DD_1970}.  However, the covariance function (\ref{K}) of the process $Z$ in (\ref{eQZ})  is associated with the quantum commutator kernel (\ref{Lambda}), and so also is the correction term $- \frac{1}{2}\Tr \ln \cos (\theta\cL)$. The latter (also see the proof of Theorem~\ref{th:fact}) comes from the Weyl CCRs (rather than the martingale property of the  Radon-Nikodym density process in Girsanov's theorem  \cite{G_1960} on absolutely continuous change of measure for classical Ito processes). In the 
classical case, when the commutator kernel $\Lambda$  vanishes (and so do the linear operator $\cL$ and its eigenfrequencies $\omega_k$), the covariance function (\ref{K}) reduces to (\ref{HH}), and $Z$ becomes the standard Wiener process.

Now, the validity of (\ref{eQZ}) does not rely on a particular system-field quantum state and employs only the commutation structure of the system variables. Nevertheless, its relevance to computing the QEF (\ref{Xi}) 
is clarified by
\begin{equation}
\label{XiZ}
  \Xi
  =
  \re^{- C}
  \bE
    \bM
    \re^{\sqrt{\theta}\int_0^T X(t)^\rT\rd Z(t)}
    =
  \re^{- C}
  \bM
    \Psi(\sqrt{\theta}Z).
\end{equation}
Here,
\begin{equation}
\label{Psi}
    \Psi(z)
    :=
      \bE
    \re^{z(T)^\rT X(T)-\int_0^T z(t)^\rT\rd X(t)}
\end{equation}
is a modified MGF 
for the system variables of the OQHO over the time interval $[0,T]$, considered for 
$z \in L^2([0,T],\mR^n)$ with $z(0)=0$.  In (\ref{XiZ}), use is made of the integration by parts (with $Z(0)=0$), and  the MGF $\Psi$ is evaluated  at the rescaled random process $\sqrt{\theta}Z$ from  (\ref{Z}) in a pathwise fashion. We have also used the commutativity between the quantum and classical expectations $\bE(\cdot)$, $\bM(\cdot)$ which describe the averaging over statistically independent quantum variables and auxiliary classical random variables.

Note that (\ref{XiZ}) does not employ specific assumptions on the system-field state in addition to the finiteness of exponential moments of the system variables (making the MGF (\ref{Psi}) well-defined). We will now specify this relation for the case when the OQHO is driven by vacuum input fields. In this case, since the matrix $A$ in (\ref{AB}) is assumed to be Hurwitz, the system variables have a unique invariant multipoint Gaussian quantum state \cite{VPJ_2018a} with zero mean and the two-point quantum covariance matrix 
$  \bE(X(s)X(t)^\rT)
  =
  P(s-t) + i\Lambda(s-t)$,
$
  s,t\> 0$. 
While its imaginary part (\ref{Lambda}) describes the two-point CCRs (\ref{XXcomm}), the real part
\begin{equation}
\label{P}
    P(\tau)
    =
    \left\{
    {\begin{matrix}
    \re^{\tau A}P_0& {\rm if}\  \tau\> 0\\
    P_0\re^{-\tau A^{\rT}} & {\rm if}\  \tau< 0\\
    \end{matrix}}
    \right.
    =
    P(-\tau)^\rT
\end{equation}
is a positive semi-definite kernel which specifies the two-point statistical correlations between the system variables, where the matrix $P_0=P(0)$ pertains to the one-point covariances and satisfies the ALE 
  $AP_0 + P_0A^\rT + BB^\rT = 0$, 
similar to (\ref{PR}). The kernel (\ref{P}) gives rise to a positive semi-definite self-adjoint operator $\cP$ on $L^2([0,T], \mC^n)$ which maps a function $f$ to $g:= \cP(f)$ as
\begin{equation}
\label{cP}
    g(s):= \int_0^T P(s-t)f(t)\rd t,
    \qquad
    0\< s \< T
\end{equation}
(with $\cP+i\cL$ also being a positive semi-definite self-adjoint operator on $L^2([0,T], \mC^n)$).
In the case of the multipoint Gaussian quantum state, the MGF (\ref{Psi}) admits a closed-form representation \cite{VPJ_2018a}, and (\ref{XiZ}) takes the form
\begin{equation}
\label{XiZ1}
  \Xi
    =
    \re^{-C}
    \bM
  \re^{\frac{\theta}{2}\int_{[0,T]^2} \rd Z(s)^\rT P(s-t)\rd Z(t)}.
\end{equation} 
The right-hand side of (\ref{XiZ1}) does not involve quantum variables and is organised as a quadratic-exponential moment for the auxiliary classical Gaussian random process $Z$ in (\ref{Z}), which can be computed as follows. 
Let $N$ be another independent auxiliary classical Gaussian random process on the time interval $[0,T]$ with values in $\mR^n$, zero mean and the covariance function (\ref{P}), so that $\bM (N(s)N(t)^\rT) = P(s-t)$ for all $0\< s,t\< T$. Then by using the MGFs for Gaussian random processes and the tower property of conditional classical expectations \cite{S_1996}, it follows that
\begin{align}
\nonumber
    \bM
  \re^{\frac{\theta}{2}\int_{[0,T]^2} \rd Z(s)^\rT P(s-t)\rd Z(t)}
   &=
  \bM
    \bM
    \big(
  \re^{\sqrt{\theta}\int_{[0,T]^2} N(t)^\rT\rd Z(t)}
  \mid
  Z\big)\\
\nonumber
  & =
  \bM
  \re^{\sqrt{\theta}\int_{[0,T]^2} N(t)^\rT\rd Z(t)}
  =
  \bM
    \bM
    \big(
  \re^{\sqrt{\theta}\int_{[0,T]^2} N(t)^\rT\rd Z(t)}
  \mid
  N\big)\\
\label{MMM}
    & =
    \bM
  \re^{\frac{\theta}{2}\bra N, \cK(N)\ket }
  =
  \re^{-\frac{1}{2} \Tr \ln (\cI - \theta \cP \cK)}
  =
  \prod_{k=1}^{+\infty}
  \tfrac{1}{\sqrt{1-\theta \lambda_k}},
\end{align}
where use is also made of  the covariance operator $\cK$  of the incremented process $Z$ in (\ref{cK}), (\ref{iso}) and the Fredholm determinant formula
\cite[Theorem~3.10 on p.~36]{S_2005} (see also \cite{G_1994} and references therein).
Here,  $\lambda_k\>0$ are the eigenvalues of $\cP\cK$ which is a compact operator on $L^2([0,T], \mR^n)$ (isospectral to the positive semi-definite self-adjoint operator $\sqrt{\cK} \cP \sqrt{\cK}$) as the composition of a bounded operator ($\cK$) and a compact operator ($\cP$ in (\ref{cP})). For the validity of (\ref{MMM}), the spectral radius condition
\begin{equation}
\label{spec}
    \br(\cP\cK) = \max_{k\in \mN} \lambda_k < \tfrac{1}{\theta}
\end{equation}
has to be satisfied.
Substitution of (\ref{C}), (\ref{cK}), (\ref{MMM}) into (\ref{XiZ1}) represents the QEF (\ref{Xi}) as
\begin{equation}
\label{XiZ2}
  \Xi
    =
    \re^{
    -  \frac{1}{2}
  \Tr (\ln\cos (\theta\cL) + \ln (\cI - \theta \cP\tanc (\theta\cL)))
  }.
\end{equation}
In the limiting classical case mentioned above, when $\cL = 0$ and $\cK=\cI$, the representation (\ref{XiZ2}) takes the form  $ \Xi=   \re^{
    -  \frac{1}{2}
  \Tr \ln (\cI - \theta \cP)
  }
$ and the condition (\ref{spec}) reduces to $\theta \br(\cP)<1$.

\section{\bf Conclusion}
\label{sec:conc}

For a multimode open quantum harmonic oscillator, we have considered a finite-horizon expansion of its system variables over the eigenfunctions of their two-point commutator kernel. This quantum Karhunen-Loeve expansion involves  pairs of noncommuting position and momentum operators (with interpair commutativity) as coefficients.  The orthogonality of the real and imaginary parts of the eigenfunctions has allowed for a single (rather than double) series representation of the integral-of-quadratic functions of the system variables. This has been used in order to obtain a Girsanov type representation for the quadratic-exponential functions of the system variables whose mean values provide robust performance criteria for quantum risk-sensitive control. This representation is valid regardless of a particular system-field state and employs an auxiliary  Gaussian random process and classical averaging over it. This result has been used in order to express the QEF in terms of the moment-generating functional of the system variables. In application to the invariant multipoint Gaussian quantum state of the oscillator driven by vacuum input fields, the QEF has been reduced to a quadratic-exponential moment of the auxiliary Gaussian random process. The frequency-domain and state-space computation of the QEF and its asymptotic growth rate will be considered in subsequent publications on this topic.

\appendix

\section*{\bf Randomised representation for quadratic-exponential functions of position-momentum pairs}
\renewcommand{\theequation}{A\arabic{equation}}
\setcounter{equation}{0}

For the purposes of Section~\ref{sec:quadro}, consider a self-adjoint quantum variable
\begin{equation}
\label{f}
  f(\sigma)
  :=
  \bM \re^{\sigma(\alpha \xi + \beta \eta )}
  =
  \tfrac{1}{2\pi}
  \int_{\mR^2}
    \re^{\sigma(a \xi + b\eta )
    -\frac{1}{2}(a^2 +b^2)}
  \rd a \rd b,
\end{equation}
which is associated with the position and momentum operators $\xi$, $\eta$ (see Section~\ref{sec:sys}) and depends on a real-valued parameter $\sigma$ satisfying
\begin{equation}
\label{sigmagood}
    |\sigma| < \sqrt{2}.
\end{equation}
Here,
the classical expectation $\bM(\cdot)$ is over auxiliary independent standard normal  random variables $\alpha$, $\beta$. The following theorem relates the quantum variables (\ref{f}) with ``elementary'' quadratic-exponential functions  of the position-momentum pair $(\xi, \eta)$.

\begin{thm}
\label{th:fact}
The quantum variables (\ref{f}) satisfy the identity
\begin{equation}
  \label{qefrand}
    \re^{\omega (\xi^2+\eta^2)}
    =
    \tfrac{1}{\cosh \omega}
    f(\sigma),
    \qquad
    \sigma := \sqrt{2\tanh \omega}
  \end{equation}
  for any $\omega\>0$.
\hfill$\square$
\end{thm}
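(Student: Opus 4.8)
The plan is to diagonalise everything in the Fock basis of a single harmonic oscillator. First I would introduce the annihilation operator $\mathsf{a} := \tfrac{1}{\sqrt{2}}(\xi + i\eta)$, so that the CCR $[\xi,\eta]=i$ becomes $[\mathsf{a},\mathsf{a}^\dagger]=1$ and the number operator $N := \mathsf{a}^\dagger\mathsf{a}$ satisfies $\xi^2+\eta^2 = 2N+1$. Hence the left-hand side of (\ref{qefrand}) is $\re^{\omega(\xi^2+\eta^2)} = \re^{\omega}\re^{2\omega N}$, and, in view of (\ref{f}), the claim reduces to showing $f(\sigma) = \cosh\omega\,\re^{\omega}\re^{2\omega N}$. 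In the same coordinates the exponent becomes linear in the ladder operators, $\sigma(\alpha\xi+\beta\eta) = z\mathsf{a}^\dagger + \overline{z}\,\mathsf{a}$ with $z := \tfrac{\sigma}{\sqrt{2}}(\alpha+i\beta)$, so that $f(\sigma) = \bM\,\re^{z\mathsf{a}^\dagger + \overline{z}\mathsf{a}}$, the averaging being over the rotation-invariant standard Gaussian law of $(\alpha,\beta)$.

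The second step establishes that $f(\sigma)$ is diagonal in the Fock basis. Rotating the integration variables by an angle $\phi$ sends $z \mapsto \re^{i\phi}z$ and, since $\re^{i\phi N}\mathsf{a}^\dagger\re^{-i\phi N} = \re^{i\phi}\mathsf{a}^\dagger$, conjugates the integrand by the unitary $\re^{i\phi N}$. Invariance of the Gaussian measure then yields $f(\sigma) = \re^{i\phi N} f(\sigma) \re^{-i\phi N}$ for every $\phi$, whence $f(\sigma)$ commutes with $N$ and can be written as $f(\sigma) = \sum_{n\geq 0} c_n\, |n\ket\bra n|$ for scalars $c_n$ still to be identified.

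Finally I would compute the diagonal entries. Disentangling $\re^{z\mathsf{a}^\dagger + \overline{z}\mathsf{a}} = \re^{|z|^2/2}\re^{z\mathsf{a}^\dagger}\re^{\overline{z}\mathsf{a}}$ by the Baker-Campbell-Hausdorff formula (the commutator being the scalar $-|z|^2$) and expanding in the Fock basis gives $\bra n|\re^{z\mathsf{a}^\dagger + \overline{z}\mathsf{a}}|n\ket = \re^{|z|^2/2}L_n(-|z|^2)$, with $L_n$ the $n$-th Laguerre polynomial. Writing $|z|^2 = \tfrac{\sigma^2}{2}(\alpha^2+\beta^2)$ and passing to the radial variable $s := \alpha^2+\beta^2$, which is exponentially distributed with density $\tfrac12\re^{-s/2}$, reduces $c_n = \bM[\re^{|z|^2/2}L_n(-|z|^2)]$ to the Laplace transform $\tfrac12\int_0^{\infty}\re^{-\lambda s}L_n(-\mu s)\,\rd s = \tfrac{(\lambda+\mu)^n}{2\lambda^{n+1}}$, where $\lambda := \tfrac12-\tfrac{\sigma^2}{4}$ and $\mu := \tfrac{\sigma^2}{2}$. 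Substituting $\sigma^2 = 2\tanh\omega$ gives $\tfrac{\lambda+\mu}{\lambda} = \re^{2\omega}$ and $\tfrac{1}{2\lambda} = \cosh\omega\,\re^{\omega}$, so that $c_n = \cosh\omega\,\re^{\omega}\re^{2\omega n}$, which is precisely the $n$-th Fock coefficient of $\cosh\omega\,\re^{\omega}\re^{2\omega N}$, completing the identity.

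I expect the main obstacle to be analytic rather than algebraic: the radial integral converges only because $\lambda = \tfrac12-\tfrac{\sigma^2}{4} > 0$, which is exactly the constraint (\ref{sigmagood}), and one must justify interchanging the classical expectation $\bM$ with the Fock expansion (equivalently, controlling the unbounded operator $\re^{2\omega N}$ on a suitable dense domain). The Laplace-transform identity for Laguerre polynomials is the one nontrivial computational ingredient; everything else is bookkeeping with the canonical commutation relations.
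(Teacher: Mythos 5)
Your argument is correct, and the computations check out (the normal-ordering factor $\re^{|z|^2/2}$, the Laguerre matrix element $\bra n|\re^{z\mathsf{a}^\dagger+\bar z\mathsf{a}}|n\ket=\re^{|z|^2/2}L_n(-|z|^2)$, the Laplace transform $\tfrac12\int_0^\infty \re^{-\lambda s}L_n(-\mu s)\rd s=\tfrac{(\lambda+\mu)^n}{2\lambda^{n+1}}$, and the identification $\tfrac{1}{2\lambda}=\re^{\omega}\cosh\omega$, $\tfrac{\lambda+\mu}{\lambda}=\re^{2\omega}$ are all right), but your route is genuinely different from the paper's. The paper never diagonalises anything: it differentiates $f(\sigma)$ in the parameter $\sigma$, uses the Weyl/BCH relation to shift $\xi\mapsto\xi-\tfrac{i}{2}\sigma\beta$ and $\eta\mapsto\eta+\tfrac{i}{2}\sigma\alpha$ inside the Gaussian average, applies the Stein-type identity $\bM(\gamma\re^{\gamma z})=z\re^{z^2/2}$ to obtain a closed $2\times 2$ linear system for the moments $\bM(\alpha g)$, $\bM(\beta g)$, and thereby derives the operator ODE $f'=\tfrac{\sigma}{1-\sigma^4/4}\big(\xi^2+\eta^2+\tfrac{\sigma^2}{2}\big)f$, which is integrated after the substitution $\sigma^2=2\tanh\omega$. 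Your spectral computation buys an explicit one-shot identity (essentially that a rotation-invariant Gaussian average of Weyl displacements is a ``thermal'' exponential of the number operator) at the cost of committing to the irreducible Schr\"odinger/Fock representation and of the domain issues you flag around $\re^{2\omega N}$ and the interchange of $\bM$ with the Fock expansion; for reducible CCR representations one must additionally note that everything factors as (Schr\"odinger copy)$\,\otimes\,$identity, since $N$ has simple spectrum only in the irreducible case. The paper's ODE method stays entirely at the level of the CCR algebra, which is why it extends directly to the multidimensional symplectic setting of \cite{VPJ_2018c}. Both proofs locate the constraint $|\sigma|<\sqrt2$ correctly: in yours it is the convergence condition $\lambda=\tfrac12-\tfrac{\sigma^2}{4}>0$ of the radial integral; in the paper's it is the nonvanishing of the denominator $1-\tfrac14\sigma^4$ coming from the BCH correction factors $\re^{\pm\frac{i}{2}\sigma^2\alpha\beta}$.
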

\begin{proof}
Differentiation of (\ref{f}) with respect to $\sigma$  yields
\begin{equation}
\label{f'}
    f'
     =
    \bM ((\alpha \xi + \beta \eta ) g)\\
    =
    \zeta^\rT
    \mu,
\end{equation}
where
\begin{equation}
\label{BCH}
    g
    :=
    \re^{\sigma(\alpha \xi + \beta \eta )}
    =
    \re^{\sigma\alpha (\xi-\frac{i}{2}\sigma \beta)}
    \re^{\sigma\beta \eta }
    =
    \re^{\sigma\beta (\eta+\frac{i}{2}\sigma \alpha)}
    \re^{\sigma\alpha\xi },
\end{equation}
and the position-momentum vector (\ref{zeta}) is used together with
\begin{equation}
\label{mu}
    \mu
    :=
      {\begin{bmatrix}
      \mu_1\\
      \mu_2
    \end{bmatrix}}
    =
     {\begin{bmatrix}
      \bM(\alpha g)\\
      \bM(\beta g)
    \end{bmatrix}}.
\end{equation}
The second and third equalities in (\ref{BCH}) follow from the Baker-Campbell-Hausdorff (BCH) formula (or, equivalently, the Weyl CCRs, similar to (\ref{Weyl})). Since $\alpha$, $\beta$ are independent standard normal random variables, application of the tower property of conditional classical expectations \cite{S_1996} leads to
\begin{align}
\nonumber
    \mu_1
    & =
    \bM
    (
    \alpha
    \re^{\sigma\alpha (\xi-\frac{i}{2}\sigma \beta)}
    \re^{\sigma\beta \eta }
    )\\
\nonumber
    &=
    \bM
    (
    \bM
    (
    \alpha
    \re^{\sigma\alpha (\xi-\frac{i}{2}\sigma \beta)}
    \mid
    \beta
    )
    \re^{\sigma\beta \eta }
    )\\
\nonumber
    &=
    \sigma
    \bM
    (
    \bM
    (
    (\xi-\tfrac{i}{2}\sigma \beta)
    \re^{\sigma\alpha (\xi-\frac{i}{2}\sigma \beta)}
    \mid
    \beta
    )
    \re^{\sigma\beta \eta }
    )    \\
\nonumber
    &=
    \sigma
    \bM
    (
    (\xi-\tfrac{i}{2}\sigma \beta)
    \re^{\sigma\alpha (\xi-\frac{i}{2}\sigma \beta)}
    \re^{\sigma\beta \eta }
    )    \\
\label{Mg1}
    &=
    \sigma
    \bM
    (
    (\xi-\tfrac{i}{2}\sigma \beta)
    g
    )
    =
    \sigma
    \xi
    f
    -
    \tfrac{i}{2}
    \sigma^2
    \mu_2,
\end{align}
where use is made of the second equality from (\ref{BCH}) together with the identity
\begin{equation}
\label{nice}
    \bM (\gamma \re^{\gamma z}) = z\re^{\frac{1}{2}z^2} = z \bM \re^{\gamma z},
\end{equation}
which holds for a standard normal random variable $\gamma$ and extends from complex numbers to operators $z$. In a similar fashion, a combination of the third equality from (\ref{BCH}) with (\ref{nice}) yields
\begin{align}
\nonumber
    \mu_2
    & =
    \bM
    (
    \beta
    \re^{\sigma\beta (\eta+\frac{i}{2}\sigma \alpha)}
    \re^{\sigma\alpha\xi }
    )\\
\nonumber
    &=
    \bM
    (
    \bM
    (
    \beta
    \re^{\sigma\beta (\eta+\frac{i}{2}\sigma \alpha)}
    \mid
    \alpha
    )
    \re^{\sigma\alpha\xi }
    )\\
\nonumber
    &=
    \sigma
    \bM
    (
    \bM
    (
    (\eta+\tfrac{i}{2}\sigma \alpha)
    \re^{\sigma\beta (\eta+\frac{i}{2}\sigma \alpha)}
    \mid
    \alpha
    )
    \re^{\sigma\alpha\xi }
    )    \\
\nonumber
    &=
    \sigma
    \bM
    (
    (\eta+\tfrac{i}{2}\sigma \alpha)
    \re^{\sigma\beta (\eta+\frac{i}{2}\sigma \alpha)}
    \re^{\sigma\alpha\xi }
    )    \\
\label{Mg2}
    &=
    \sigma
    \bM
    (
    (\eta+\tfrac{i}{2}\sigma \alpha)
    g
    )
    =
    \sigma
    \eta
    f
    +
    \tfrac{i}{2}
    \sigma^2
    \mu_1.
\end{align}
The relations (\ref{Mg1}), (\ref{Mg2}) form a set of two linear equations for the vector $\mu$ in  (\ref{mu}), which can be represented by using the matrix (\ref{bJ}) and the vector (\ref{zeta}) as
$
    \mu
    =
    \sigma
    \zeta
    f
    -
    \tfrac{i}{2}
    \sigma^2
    \bJ
    \mu
$,
and hence,
\begin{equation}
\label{mu1}
    \mu
    =
    \sigma
    \big(
    I_2
    +
    \tfrac{i}{2}
    \sigma^2
    \bJ
    \big)^{-1}
    \zeta
    f
    =
    \tfrac{\sigma}{1-\frac{1}{4} \sigma^4}
    \big(I_2
    -
    \tfrac{i}{2}
    \sigma^2
    \bJ
    \big)
    \zeta
    f
\end{equation}
in view of the idempotence $(i\bJ)^2 = I_2$. Substitution of (\ref{mu1}) into (\ref{f'}) leads to
\begin{equation}
\label{f'1}
    f'
    =
    \tfrac{\sigma}{1-\frac{1}{4} \sigma^4}
    \zeta^\rT
    \big(I_2
    -
    \tfrac{i}{2}
    \sigma^2
    \bJ
    \big)
    \zeta
    f
    =
    \tfrac{\sigma}{1-\frac{1}{4} \sigma^4}
    \big(
    \zeta^\rT \zeta
    +
    \tfrac{1}{2}
    \sigma^2
    \big)
    f,
\end{equation}
where the CCR $\zeta^{\rT}\bJ \zeta = [\xi, \eta] = i$ is used. Here, the denominator $1-\frac{1}{4} \sigma^4$ originates from the BCH correction factors $\re^{\pm \frac{i}{2}\sigma^2 \alpha \beta}$ in (\ref{BCH}), which explains the nature of the constraint (\ref{sigmagood}).
Now, (\ref{f'1}) is an operator differential equation with the identity operator as the initial condition $f(0)=\cI$. It can be integrated by using a new variable $\omega\>0$, related to $\sigma$  as
\begin{equation}
\label{lamsig}
    \omega := \tfrac{1}{2}\ln\tfrac{1+\frac{1}{2}\sigma^2}{1-\frac{1}{2}\sigma^2},
    \qquad
    \sigma = \sqrt{2\tfrac{\re^{2\omega}-1}{\re^{2\omega}+1}} = \sqrt{2\tanh \omega}.
\end{equation}
Indeed, $\tfrac{\sigma \rd \sigma}{1-\frac{1}{4} \sigma^4} = \tfrac{\rd (\sigma^2/2)}{1-\frac{1}{4} \sigma^4} = \rd \omega$, and hence, (\ref{f'1}) can be represented in the form $\rd f = f'\rd \sigma = (\zeta^\rT \zeta + \tanh \omega )f\rd \omega$ whose solution is given by
\begin{equation}
\label{qeff}
    f(\sigma) = \re^{\int_0^\omega\tanh u \rd u }\re^{\omega \zeta^\rT \zeta} f(0)
    =
    \re^{\omega \zeta^\rT \zeta} \cosh \omega.
\end{equation}
With (\ref{lamsig}) describing a bijection $[0,+\infty)\ni \omega \leftrightarrow \sigma \in [0,\sqrt{2})$, it now remains to note that $\zeta^\rT \zeta = \xi^2 + \eta^2$, whereby (\ref{qeff}) establishes (\ref{qefrand}).
\end{proof}

\end{document}